\newtheorem{assumption}{Assumption}
\newtheorem{definition}{Definition}
\newtheorem{theorem}{Theorem}
\renewcommand{\qedsymbol}{}  
\definecolor{primary}{HTML}{003049}      
\definecolor{accent}{HTML}{F77F00}       
\definecolor{textgray}{HTML}{2F2F2F}
\definecolor{dividergray}{HTML}{DADADA}
\titlespacing*{\section}{0pt}{6pt}{3pt}
\titlespacing*{\subsection}{0pt}{4pt}{2pt}
\setlist[itemize]{left=1.2em, itemsep=2pt, topsep=2pt, parsep=0pt}
\setlist[enumerate]{left=1.2em, itemsep=2pt, topsep=2pt, parsep=0pt}
\def\@maketitle{%
  \begin{center}
    {\fontsize{26pt}{20pt}\selectfont \bfseries \textcolor{primary}{On the Intractability of Chaotic Symbolic Walks: Toward a Non-Algebraic Post-Quantum Hardness Assumption}}\\[1ex]
    {\normalsize 
\textbf{
Mohamed Aly Bouke\,%
\href{https://orcid.org/0000-0003-3264-601X}{\includegraphics[height=1.8ex]{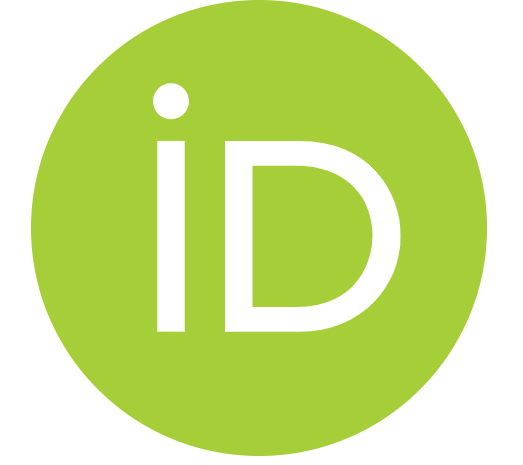}}%
\textsuperscript{1,*}
}
    }\\[0.8ex]
    {\footnotesize
      \textsuperscript{1}Department of Communication Technology and Network,\par Faculty of Computer Science and Information Technology, \par Universiti Putra Malaysia, Serdang 43400, Malaysia
    }\\[0.5ex]
    {\scriptsize \texttt{*bouke@ieee.org}}\\[1ex]
    {\scriptsize \textit{Short communication paper} — \today}
  \end{center}
}
\renewcommand{\maketitle}{%
  \twocolumn[
    \color{textgray}
    \@maketitle
    \vspace{-1.2em}
  ]
}
\begin{document}
\maketitle

\begin{strip}
  \begin{center}
    \begin{tcolorbox}[abstractstyle, title=Abstract]
      
      \normalsize
        Most classical and post-quantum cryptographic assumptions, including integer factorization, discrete logarithms, and Learning with Errors (LWE), rely on algebraic structures such as rings or vector spaces. While mathematically powerful, these structures can be exploited by quantum algorithms or advanced algebraic attacks, raising a pressing need for structure-free alternatives. To address this gap, we introduce the \emph{Symbolic Path Inversion Problem (SPIP)}, a new computational hardness assumption based on symbolic trajectories generated by contractive affine maps with bounded noise over $\mathbb{Z}^2$. Unlike traditional systems, SPIP is inherently non-algebraic and relies on chaotic symbolic evolution and rounding-induced non-injectivity to render inversion computationally infeasible. We prove that SPIP is PSPACE-hard and \#P-hard, and demonstrate through empirical simulation that even short symbolic sequences (e.g., $n = 3$, $m = 2$) can produce over 500 valid trajectories for a single endpoint, with exponential growth reaching $2^{256}$ paths for moderate parameters. A quantum security analysis further shows that Grover-style search offers no practical advantage due to oracle ambiguity and verification instability. These results position SPIP as a viable foundation for post-quantum cryptography that avoids the vulnerabilities of algebraic symmetry while offering scalability, unpredictability, and resistance to both classical and quantum inversion.

      \vspace{0.5em}
      \textbf{Keywords:} Post-quantum cryptography, Symbolic dynamics, One-way functions, Structure-free cryptography, Chaotic maps.
Fractal trajectories
    \end{tcolorbox}
  \end{center}
\end{strip}

\section{Introduction} \label{sec:intro}
\vspace{0.8em}

Modern public-key cryptography relies heavily on algebraic hardness assumptions, including problems such as integer factorization, discrete logarithms, and lattice-based constructions like Learning with Errors (LWE). These assumptions enable powerful reductions and efficient implementations, but they also possess mathematical regularity, rings, fields, and structured polynomials that may ultimately be exploited by advanced algebraic or quantum algorithms.

The emergence of quantum computing has raised concerns about the long-term viability of such structured assumptions. Shor’s algorithm, in particular, renders RSA and ECC insecure by solving their core problems in polynomial time \cite{shor1994algorithms}. While lattice-based systems currently offer promising post-quantum resilience, they remain embedded within algebraic frameworks that could, in principle, be compromised by future quantum advances or novel cryptanalytic techniques \cite{hecht2021pqc,alagic2024status,aharonov2005lattice}.

These challenges motivate a fundamental question: Can we construct cryptographic hardness assumptions that do not depend on algebraic structures at all? This paper proposes an affirmative answer, introducing a new class of hardness assumptions grounded in the dynamics of symbolic chaos. Specifically, we investigate whether the inversion of symbolic paths, generated by contractive affine transformations with bounded noise and discrete rounding, can serve as a foundation for a fundamentally different form of computational hardness, distinct from classical algebraic paradigms.

This work does not emerge in isolation, but builds upon a prior mathematical investigation of chaotic systems in the form of Random Nonlinear Iterated Function Systems (RNIFS) \cite{bouke2025rnifs}. That foundational study rigorously analyzes the properties of symbolic trajectories, bounded noise, and fractal attractors in discrete chaotic maps. It demonstrates how noise-induced perturbations and rounding operations create combinatorial ambiguity and symbolic unpredictability, leading to complex, non-invertible systems. In this paper, we leverage those mathematical insights to define the \textit{Symbolic Path Inversion Problem} (SPIP) as a cryptographic hardness assumption. Together, these works form a coherent research thread aimed at exploring structure-free, chaos-based approaches to post-quantum cryptography.

Our objective here is not to design a full cryptographic scheme, but rather to define and justify a standalone computational assumption: that inverting symbolic chaotic trajectories is infeasible for any efficient algorithm. We formally define the symbolic transformation model, analyze its combinatorial and dynamical properties, and establish hardness results via reductions from well-known complexity theoretic problems. This theoretical foundation is intended to support future research into cryptographic primitives rooted in symbolic dynamics, offering a fundamentally different hardness landscape for the quantum era.

\subsection{Positioning SPIP}

The SPIP proposed in this paper is intended as a novel computational hardness assumption, conceptually parallel to foundational problems such as integer factorization (used in RSA) \cite{rivest1984rsa}, the discrete logarithm problem (used in ECC) \cite{merkle1978secure}, the LWE problem \cite{regev2009lattices,lyubashevsky2010ideal}, and multivariate polynomial systems (e.g., HFE, Rainbow) \cite{li1997numerical,ding2005rainbow}.

Whereas these traditional assumptions are rooted in algebraic structures such as finite fields, polynomial rings, or vector spaces, SPIP is built on symbolic dynamics over the integer lattice~$\mathbb{Z}^2$. This difference becomes particularly significant in the post-quantum setting, where many known quantum algorithms, such as Shor’s, depend on algebraic regularity to achieve speed-ups.

SPIP instead aligns with a growing class of cryptographic approaches that draw hardness from combinatorial explosion, chaotic evolution, and symbolic unpredictability. To the best of our knowledge, this is the first formalization of SPIP as a cryptographic assumption, marking a new direction in post-quantum design based on symbolic complexity rather than algebraic structure.

\begin{tcolorbox}[title=\textbf{Conceptual Primer: Symbolic Dynamics and SPIP}, colframe=accent, coltitle=black, fonttitle=\bfseries]
\textbf{Symbolic Dynamics:} A mathematical framework where the state of a system evolves through discrete symbols (e.g., integers or letters), rather than real-valued equations. In SPIP, each symbol represents an affine transformation applied to a point in the grid.

\textbf{Affine Transformation:} A function of the form $T(x) = A x + b$, where $A$ is a matrix and $b$ is a vector. It maps points to new positions by rotating, scaling, and shifting them.

\textbf{Contractive Map:} A transformation that brings points closer together. This ensures trajectories don’t diverge uncontrollably, but rather cluster.

\textbf{Rounding (Floor Function):} After applying a transformation, we round the output to the nearest integer point in $\mathbb{Z}^2$. This introduces symbolic ambiguity.

\textbf{Noise ($\delta$):} A small random vector added to each transformation step, making the system non-deterministic and amplifying path diversity.

\textbf{Symbolic Path:} A sequence of points $\{x_0, x_1, ..., x_n\}$ generated by applying transformations according to a symbolic code (e.g., $\sigma = (1,2,1)$).

\textbf{SPIP:} The challenge of recovering any valid symbolic path from the start and end points is the basis for our proposed cryptographic hardness assumption.
\end{tcolorbox}

\section{Mathematical Preliminaries} \label{sec:preliminaries}
\vspace{0.8em}

We begin by formalizing the symbolic dynamical model that underlies our hardness assumption. The system is defined over the two-dimensional integer lattice $\mathbb{Z}^2$ and evolves through a discrete sequence of contractive affine transformations, perturbed by bounded random noise. Each transformation step produces a symbolic trajectory point, forming a path whose inversion constitutes the core cryptographic challenge.

To more precisely characterize the evolution of symbolic trajectories, we refine the model by constraining the transformation sequence $\mathcal{T} = \{T_i\}_{i=0}^{n-1}$ to be selected from a finite collection of contractive affine maps, denoted $\mathbb{T} = \{T^{(1)}, T^{(2)}, ..., T^{(m)}\}$. Each $T^{(j)}(x) = \lfloor A^{(j)} x + b^{(j)} + \delta \rfloor$ is defined by a fixed contractive matrix $A^{(j)}$ and translation vector $b^{(j)}$, where $\|A^{(j)}\|_2 < 1$ and $b^{(j)} \in \mathbb{R}^2$.

The specific transformation $T_i$ applied at step $i$ is determined by a symbolic code $\sigma_i \in \Sigma = \{1, 2, ..., m\}$, so that:
\[
T_i := T^{(\sigma_i)}
\]
This symbolic sequence $\sigma = (\sigma_0, ..., \sigma_{n-1}) \in \Sigma^n$ defines the transformation path, analogous to a control code in Iterated Function Systems (IFS) or symbolic subshifts. The evolution of the trajectory thus follows:
\[
x_{i+1} = T^{(\sigma_i)}(x_i) = \lfloor A^{(\sigma_i)} x_i + b^{(\sigma_i)} + \delta_i \rfloor
\]

This refinement enables structured modeling of the transformation process and aligns SPIP with well-established paradigms in symbolic dynamics. It also ensures sufficient entropy and branching complexity by allowing $m > 1$ symbolic options per step, increasing the combinatorial diversity of feasible paths.

\subsection{Contractive Affine Maps over $\mathbb{Z}^2$}

Let $\mathcal{T} = \{T_i\}_{i=0}^{n-1}$ be a finite sequence of affine transformations acting on $\mathbb{Z}^2$, where each map $T_i$ is defined as:
\[
T_i(x) = \lfloor A_i x + b_i + \delta_i \rfloor
\]
With:
\begin{itemize}
    \item $A_i \in \mathbb{R}^{2 \times 2}$ is a contractive matrix: $\|A_i\|_2 < 1$.
    \item $b_i \in \mathbb{R}^2$ is a deterministic translation vector.
    \item $\delta_i \in [-\epsilon, \epsilon]^2$ is a bounded noise vector, sampled independently for each $i$.
    \item $\lfloor \cdot \rfloor$ denotes component-wise floor rounding to the nearest integer coordinate in $\mathbb{Z}^2$.
\end{itemize}

Contractiveness ensures that small variations in the input amplify over time into symbolic divergence, while rounding and noise introduce nonlinearity and symbolic ambiguity.

Given the finite transformation set $\mathbb{T} = \{T^{(1)}, T^{(2)}, ..., T^{(m)}\}$ as introduced above, we define a symbolic alphabet $\Sigma = \{1, 2, ..., m\}$ representing the indices of available transformations. Each symbolic trajectory can therefore be described not only in terms of its state-space path $\mathcal{P} = \{x_0, x_1, ..., x_n\}$, but also via its associated symbolic code:
\[
\sigma = (\sigma_0, \sigma_1, ..., \sigma_{n-1}) \in \Sigma^n
\]
where $T_i := T^{(\sigma_i)}$ at step $i$. This symbolic code serves as a discrete encoding of the transformation sequence that governs the trajectory evolution.

Furthermore, while the SPIP model relies on contractive affine maps of the form $T(x) = \lfloor A x + b + \delta \rfloor$, this local use of affine structure does not contradict our broader claim of being "structure-free" in the cryptographic sense. In traditional cryptography, hardness assumptions often derive from well-behaved algebraic objects such as groups, rings, or fields, structures that possess closure, associativity, and invertibility, and are thus amenable to symbolic manipulations or quantum algorithms exploiting hidden symmetries (e.g., Shor's or index calculus methods).

In contrast, SPIP's use of affine transformations is severely disrupted by two critical mechanisms: (1) bounded random noise $\delta \in [-\epsilon, \epsilon]^2$, and (2) nonlinear discretization via the floor function. These mechanisms introduce non-differentiability, state-space discontinuities, and path non-uniqueness, which together obliterate the algebraic tractability of the underlying affine maps.

More precisely, even though $A$ and $b$ are linearly defined, the overall transformation $x \mapsto \lfloor A x + b + \delta \rfloor$ is not injective, not continuous, and not reversible, and its composition over multiple steps amplifies symbolic ambiguity exponentially. Therefore, the resulting symbolic trajectory space does not admit algebraic reductions or structured analysis, rendering SPIP fundamentally non-algebraic in its global behavior, even if affine primitives are used locally.

This distinction aligns with the goal of constructing post-quantum assumptions that resist algebraic exploitation, the presence of localized linear components does not imply that the system as a whole exhibits algebraic structure exploitable by adversaries.

\subsection{Symbolic Trajectories}
Given an initial point $x_0 \in \mathbb{Z}^2$, the symbolic trajectory $\mathcal{P}$ of length $n$ is defined recursively by:
\[
x_{i+1} = T_i(x_i) = \lfloor A_i x_i + b_i + \delta_i \rfloor
\]
for $i = 0, 1, \dots, n-1$. The resulting path $\mathcal{P} = \{x_0, x_1, ..., x_n\}$ evolves deterministically concerning the transformations and the sampled noise.

Each $x_i$ can be seen as a symbol in a trajectory over the space $\mathbb{Z}^2$, and the entire path represents a discrete symbolic encoding of a chaotic process. The symbolic growth is highly sensitive to the initial condition $x_0$ and to perturbations introduced at each step.

\subsection{Noise Distribution and Entropy Source}
The noise vectors $\delta_i$ are assumed to be drawn from a bounded, uniform distribution:
\[
\delta_i \sim \mathcal{U}([-\epsilon, \epsilon]^2)
\]
for some small fixed $\epsilon > 0$.

The noise serves two purposes:
\begin{itemize}
    \item It breaks determinism and injects entropy into the symbolic path, increasing the difficulty of inversion.
    \item It ensures that multiple symbolic paths may lead to the same endpoint $x_n$, amplifying combinatorial ambiguity.
\end{itemize}

The combination of contractiveness, rounding, and bounded noise results in a trajectory space with exponential growth in the number of valid symbolic paths a key element in establishing the hardness of the inversion problem in subsequent sections.

\section{Symbolic Path Inversion Problem} \label{sec:spip}
\vspace{0.8em}

We now formalize the core inversion problem associated with symbolic chaotic trajectories. This inversion task constitutes the basis for our proposed hardness assumption and serves as the central computational challenge in this paper.

\subsection{Problem Definition}

Let $\mathcal{T} = \{T_i\}_{i=0}^{n-1}$ be a fixed sequence of contractive affine transformations as described in Section~\ref{sec:preliminaries}, and let $x_0 \in \mathbb{Z}^2$ be the initial point of a symbolic trajectory of length $n$. The symbolic path evolves according to:
\[
x_{i+1} = T_i(x_i) = \lfloor A_i x_i + b_i + \delta_i \rfloor
\quad \text{for } i = 0, 1, \dots, n-1
\]
Let $x_n$ be the endpoint of the trajectory. The inversion problem is defined as follows:

\begin{definition}[SPIP]
Given the initial and final points $x_0, x_n \in \mathbb{Z}^2$, the transformation sequence $\mathcal{T}$, and the trajectory length $n$, find any valid symbolic path $\mathcal{P} = \{x_0, x_1, \dots, x_n\}$ consistent with the transformation rules and noise bounds.
\end{definition}

This formulation abstracts away any cryptographic hashing or encoding layers. Our goal is to evaluate the inherent difficulty of reconstructing a valid symbolic trajectory from its endpoints and transformation sequence alone.

\subsection{Problem Characteristics}

The SPIP exhibits several key properties that contribute to its inversion hardness:

Due to the symbolic rounding and bounded noise introduced at each step, the number of valid intermediate states $x_i$ grows combinatorially with $n$. Even under fixed transformations $\mathcal{T}$, the symbolic ambiguity induced by $\delta_i$ and discretization leads to an exponentially large set of feasible trajectories. This yields a path space of size $\Omega(m^n)$ for some constant $m > 1$, which precludes exhaustive enumeration for moderate $n$.

Distinct trajectories may converge to the same endpoint $x_n$, a phenomenon we refer to as \emph{symbolic overlap}. This many-to-one mapping arises from the rounding and perturbation noise, rendering the inversion problem non-injective. In particular, there may exist exponentially many valid trajectories mapping $(x_0, \mathcal{T}) \rightarrow x_n$.

Importantly, the difficulty of SPIP does not depend on cryptographic hashing. The hardness is intrinsic to the symbolic dynamics: recovering even a single valid trajectory from $x_0$, $x_n$, and $\mathcal{T}$ requires navigating a non-deterministic transformation space whose symbolic ambiguity and path collisions make deterministic reconstruction computationally infeasible.

These characteristics, taken together, motivate our forthcoming formalization of the SPIP Hardness Assumption as a non-algebraic foundation for post-quantum cryptography.

\subsection{Illustrative Numerical Example}

To concretely demonstrate the practical difficulty of inverting symbolic trajectories in SPIP, we present a simplified numerical example with minimal parameters. This example highlights how even short symbolic paths with limited transformations and small noise can produce significant ambiguity, rendering inversion computationally infeasible.

\paragraph{Setup:}
We consider the following configuration:
\begin{itemize}
  \item Initial point: $x_0 = (0, 0)$
  \item Number of steps: $n = 3$
  \item Transformation count: $m = 2$
  \item Noise bound: $\epsilon = 0.5$
  \item Contractive matrix: $A = 0.5 \cdot I_2$
  \item Transformations:
  \begin{align*}
    T^{(1)}(x) &= \left\lfloor A x + (1, 0) + \delta \right\rfloor \\
    T^{(2)}(x) &= \left\lfloor A x + (0, 1) + \delta \right\rfloor
  \end{align*}
\end{itemize}

The total number of symbolic sequences of length $n=3$ from $m=2$ transformations is:
\[
|\Sigma^3| = m^n = 2^3 = 8
\]
Each symbolic sequence $\sigma = (\sigma_0, \sigma_1, \sigma_2)$ represents a different control path over the transformation set. For each $\sigma$, the bounded noise $\delta_i \in [-0.5, 0.5]^2$ introduces variability in the trajectory outcome due to rounding.

Let us estimate the number of distinct outcomes from a single transformation. Assume the perturbed region $Ax + b + \delta$ spans a square of width $1$ in $\mathbb{R}^2$. The number of integer lattice points covered by this region, after applying $\lfloor \cdot \rfloor$, can be up to $4$ distinct integer outputs per transformation (as floor can land on 2 values per coordinate).

Thus, the effective branching factor per step is at least:
\[
k \geq 4
\]
This leads to a lower bound on the total number of distinct possible trajectories of:
\[
|\mathcal{W}_3| \geq (m \cdot k)^n = (2 \cdot 4)^3 = 8^3 = 512
\]
So, even with only 2 transformations and 3 steps, there can be \textbf{at least 512} valid symbolic trajectories, many of which may collide at the same endpoint $x_3$.

Assume a randomly chosen symbolic sequence $\sigma = (1, 2, 1)$ and sampled noise vectors:
\[
\delta_1 = (0.3, -0.4), \quad \delta_2 = (-0.2, 0.2), \quad \delta_3 = (0.1, 0.4)
\]

We compute the trajectory:
\begin{align*}
x_1 &= \left\lfloor 0.5 \cdot (0, 0) + (1, 0) + (0.3, -0.4) \right\rfloor = (1, -1) \\
x_2 &= \left\lfloor 0.5 \cdot (1, -1) + (0, 1) + (-0.2, 0.2) \right\rfloor = \left\lfloor (0.3, 0.7) \right\rfloor = (0, 0) \\
x_3 &= \left\lfloor 0.5 \cdot (0, 0) + (1, 0) + (0.1, 0.4) \right\rfloor = \left\lfloor (1.1, 0.4) \right\rfloor = (1, 0)
\end{align*}

Final output: $x_3 = (1, 0)$

An adversary trying to recover the intermediate trajectory from $(x_0, x_3)$ and the known transformation set faces:
\begin{itemize}
  \item $|\Sigma^3| = 8$ symbolic sequences
  \item For each, at least $4^3 = 64$ noise-induced discrete paths
  \item Total symbolic ambiguity: $\geq 8 \times 64 = 512$ candidate paths
\end{itemize}

Many of these 512 paths may lead to the same endpoint $x_3$, due to rounding and contraction. However, from the final point alone, the adversary cannot distinguish which symbolic path or noise realization was taken. Thus, inversion becomes computationally infeasible, even for this minimal configuration.

\section{Complexity Theoretic Analysis} \label{sec:complexity}
\vspace{0.8em}

To justify the computational hardness of the SPIP, we now analyze its complexity from a theoretical standpoint. We establish its relationship to well-known problems in symbolic dynamics and formal verification, demonstrating that SPIP is at least as hard as classic problems in the \#P and PSPACE complexity classes.

\subsection{PSPACE-Hardness}

We now formally prove that SPIP is PSPACE-hard by reduction from the Symbolic Reachability problem, which is known to be PSPACE-complete \cite{arora2009computational}.

\begin{definition}[Symbolic Reachability Problem]
Given a finite symbolic transition system $\Sigma = (S, T, s_0, s_t)$, where:
\begin{itemize}
  \item $S$ is a finite symbolic state space,
  \item $T \subseteq S \times S$ is a set of transitions,
  \item $s_0, s_t \in S$ are the initial and target states,
\end{itemize}
decide whether there exists a sequence of transitions $s_0 \rightarrow s_1 \rightarrow \cdots \rightarrow s_t$ of length $\leq n$ such that each $(s_i, s_{i+1}) \in T$.
\end{definition}

\begin{theorem}
The SPIP is PSPACE-hard.
\end{theorem}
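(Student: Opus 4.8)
The plan is to give a polynomial-time many-one reduction from the Symbolic Reachability Problem (taken as PSPACE-complete in the excerpt) to the decision version of SPIP, i.e.\ ``given $x_0$, $x_n$, the transformation rules, and $n$, does a valid symbolic trajectory of length $n$ from $x_0$ to $x_n$ exist?''. Given an instance $(S, T, s_0, s_t)$ with step bound $n$, the first step is to fix an injective encoding $\iota : S \to \mathbb{Z}^2$ whose image sits in a bounded window centered at the origin and whose distinct points are separated by a gap $\Delta$ that is large relative to the noise bound $\epsilon$ and to the operator norm of the contraction used below; each integer cell then effectively ``names'' a state.

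The second step is to realize the transition relation through the affine maps. I would let the symbolic alphabet $\Sigma = \{1,\dots,m\}$ index the transitions of $T$ (or, when $T$ is given succinctly, index a fixed gadget set whose compositions simulate one succinct transition step), pick a single highly contractive $A = \rho I$ with $\rho$ small enough that $A\,\iota(s)$ lies within a cell-sized neighborhood of the origin for every $s$ in the window, and then tune the translation $b^{(j)}$ so that $\lfloor A\,\iota(s) + b^{(j)} + \delta \rfloor = \iota(s')$ precisely when $(s,s')\in T$ is the edge encoded by symbol $j$, with the noise ball $[-\epsilon,\epsilon]^2$ absorbing the residual $s$-dependent offset. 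Setting $x_0 = \iota(s_0)$ and $x_n = \iota(s_t)$, a length-$n$ SPIP trajectory then exists iff $s_0 \rightarrow \cdots \rightarrow s_t$ is realizable in $T$ within $n$ steps: each symbolic step moves between cells exactly along edges of the transition graph, and conversely every reachability path is witnessed by the matching symbol sequence together with the noise values that correct the contracted offsets. A final bookkeeping step checks that all matrices, vectors, and $\epsilon$ have bit-length polynomial in the input, so that the reduction runs in polynomial time.

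I expect the main obstacle to be the tension between the contractiveness requirement $\|A\|_2 < 1$ and faithful simulation: contraction deliberately pulls the state-cells toward one another, so I must establish soundness and completeness simultaneously. Soundness means $\lfloor A\,\iota(s) + b^{(j)} + \delta\rfloor$ can never land on the cell of a non-successor of $s$, for any admissible $\delta$; completeness means every genuine edge $(s,s')\in T$ is hit by some $\delta\in[-\epsilon,\epsilon]^2$. Both reduce to a packing estimate: the set $\lfloor A\,\iota(s) + b^{(j)} + [-\epsilon,\epsilon]^2\rfloor$ is a small axis-aligned block of lattice points whose location is pinned down to within $\rho\cdot(\text{window diameter}) + \epsilon$, and I would choose $\Delta$, $\rho$, and $\epsilon$ so that this block is forced to coincide with the single target cell $\iota(s')$. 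Handling a succinct encoding of $T$ (needed for genuine PSPACE-hardness rather than mere NL-hardness) adds a layer: one ``macro-transition'' must be expanded into a bounded number of SPIP steps implementing the circuit that tests $(s,s')\in T$, which inflates $n$ by a polynomial factor but leaves the rest of the argument intact.
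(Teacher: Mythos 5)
Your reduction is the same one the paper uses: embed the states of a symbolic transition system into $\mathbb{Z}^2$, realize each transition by a contractive affine map whose translation is tuned so that noise plus rounding lands on the target cell, and identify reachability paths with SPIP trajectories. You are in fact more careful than the paper in two places: you observe that reachability for an explicitly given transition relation is only NL-complete, so genuine PSPACE-hardness needs a succinctly encoded $T$ expanded into gadgets (the paper silently treats the explicit version as PSPACE-complete), and you state the soundness and completeness obligations explicitly rather than asserting an ``inversion equivalence.''

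However, your resolution of the soundness obligation does not work, and this is a genuine gap. You choose $\rho$ so small that $A\,\iota(s)$ lies in a cell-sized neighborhood of the origin for every state $s$; but then the image block $\lfloor A\,\iota(s)+b^{(j)}+[-\epsilon,\epsilon]^2\rfloor$ is, up to one lattice cell, independent of the source $s$ --- your own packing estimate pins it to within $\rho\cdot(\text{window diameter})+\epsilon$ of $\lfloor b^{(j)}\rfloor$ uniformly in $s$. Consequently the requirement ``$\lfloor A\,\iota(s)+b^{(j)}+\delta\rfloor=\iota(s')$ precisely when $(s,s')$ is the edge encoded by $j$'' is unsatisfiable: the map realizes the edge into $s'$ from \emph{every} source simultaneously, so a SPIP trajectory may ``teleport'' along any edge symbol from any state, and a valid trajectory from $x_0$ to $x_n$ can exist even when $s_t$ is unreachable from $s_0$. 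Taking $\rho$ larger does not obviously repair this, because SPIP places no constraint that intermediate points $x_i$ lie in the image of $\iota$, so trajectories may pass through garbage lattice points from which a later strongly contractive map still reaches its intended target cell. Closing the reduction requires a mechanism that makes each map's output genuinely depend on, and validate, its source state --- for instance maps that provably strand the trajectory forever when applied from a wrong cell. To be fair, the paper's own proof asserts its Step 4 without addressing this either; but since you explicitly identified soundness as the main obstacle, you should also recognize that the packing argument you sketch establishes the opposite of what soundness needs.
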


\begin{proof}
We reduce from the Symbolic Reachability problem. Let $\Sigma = (S, T, s_0, s_t)$ be a symbolic system as defined. We construct an SPIP instance $\mathcal{I}$ such that solving $\mathcal{I}$ allows us to solve $\Sigma$.

\textbf{Step 1: Encoding States.} For each symbolic state $s_i \in S$, define a unique point $x_i \in \mathbb{Z}^2$. This encoding can be achieved by assigning integer lattice positions using a bijective mapping $\phi: S \rightarrow \mathbb{Z}^2$.

\textbf{Step 2: Encoding Transitions.} For each symbolic transition $(s_i, s_j) \in T$, define a contractive affine transformation $T_{(i,j)}(x) = \lfloor A_{(i,j)} x + b_{(i,j)} + \delta_{(i,j)} \rfloor$, where:
\begin{itemize}
  \item $A_{(i,j)}$ is a fixed contractive matrix (e.g., $0.5I$),
  \item $b_{(i,j)}$ is chosen such that $T_{(i,j)}(\phi(s_i)) \approx \phi(s_j)$,
  \item $\delta_{(i,j)}$ is bounded noise allowing minor deviation, ensuring convergence to $x_j$ under floor rounding.
\end{itemize}

\textbf{Step 3: Constructing Trajectory.} Given a path from $s_0$ to $s_t$ in $\Sigma$, the sequence of symbolic states $(s_0, s_1, \dots, s_t)$ corresponds to a valid symbolic trajectory $(x_0, x_1, \dots, x_t)$ in SPIP.

\textbf{Step 4: Inversion Equivalence.} Solving the SPIP instance, i.e., recovering a valid path $\{x_0, \dots, x_t\}$ consistent with the affine maps and noise, allows recovery of the original symbolic path in $\Sigma$. Therefore, solving SPIP solves Symbolic Reachability.

\textbf{Step 5: Polynomial Reduction.} The encoding steps (bijection, matrix selection, affine setup) are all computable in polynomial time concerning the size of $\Sigma$.

Thus, SPIP is PSPACE-hard under polynomial-time reductions.
\end{proof}

\subsection{\#P-Hardness}

We now formally establish that the counting version of SPIP is \#P-hard by reducing from the classical \#PATH COUNT problem \cite{valiant1979complexity, sipser1996introduction}.

\begin{definition}[\#PATH COUNT Problem]
Given a directed acyclic graph (DAG) $G = (V, E)$ with distinguished nodes $s, t \in V$, compute the number of distinct paths from $s$ to $t$.
\end{definition}

\begin{theorem}
The problem of counting valid SPIP trajectories (SPIP-Count) is \#P-hard.
\end{theorem}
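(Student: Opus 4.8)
The plan is to build a parsimonious (count-preserving) polynomial-time reduction from \#PATH COUNT to SPIP-Count, reusing the vertex- and edge-encoding scheme of the PSPACE-hardness proof but now pinning down every free parameter so that the resulting SPIP instance is \emph{ambiguity-free}: its valid trajectories will be in bijection with the $s$-$t$ paths of the input graph. Because SPIP fixes the trajectory length $n$ in advance, I would first replace $G=(V,E)$ by its layered expansion $G'$: take $L=|V|$ layers, turn each edge $(u,v)$ into an edge from the layer-$i$ copy of $u$ to the layer-$(i{+}1)$ copy of $v$ for every $i$, and give $t$ a ``stay'' edge from its layer-$i$ copy to its layer-$(i{+}1)$ copy. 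Every $s$-$t$ path of $G$ then corresponds to exactly one path of $G'$ of length precisely $L$, so setting $n=L$ loses nothing.

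Next I would embed $V(G')$ into $\mathbb{Z}^2$ by an injection $\phi$ that records the layer index in the second coordinate and a fast-growing (e.g.\ exponentially spaced) identifier in the first, and realize each edge $e=(u,v)$ of $G'$ by the contractive map $T_e(x)=\lfloor A_e x + b_e + \delta\rfloor$ with $A_e=\tfrac12 I$, $b_e=\phi(v)-\tfrac12\phi(u)+(\tfrac12,\tfrac12)$, and noise bound $\epsilon<\tfrac12$. With these choices $T_e(\phi(u))=\phi(v)$ for \emph{every} admissible $\delta$, so all ``good'' transitions are deterministic and noise-insensitive; the symbolic alphabet is $\Sigma=E(G')$, and on a trajectory that only visits vertex encodings the symbolic code is forced to equal the corresponding edge sequence. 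Hence a path $s=v_0\to v_1\to\cdots\to v_L=t$ of $G'$ yields exactly one valid SPIP trajectory $\phi(v_0),\dots,\phi(v_L)$, and distinct paths yield distinct trajectories; this gives the ``easy'' injection from paths into valid trajectories, and the decoding $\phi(v_i)\mapsto v_i$ shows it is onto the set of paths.

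The real content is the converse: that these are the \emph{only} valid trajectories, i.e.\ that the rounding-and-noise ambiguity which makes SPIP hard in general cannot fabricate extra trajectories in this engineered instance. I would establish, by induction along any valid trajectory issuing from $\phi(s)$, a \emph{no-escape lemma}: every reachable point is some $\phi(w)$. The inductive step uses that for $w\neq\operatorname{tail}(e)$ the image $T_e(\phi(w))$ sits a bounded but strictly positive distance from $\phi(v)$, so by the fast spacing of $\phi$ it can never coincide with any vertex encoding, and that the layer coordinate strictly increases at each step, so an off-track point can never be routed back onto the encoding set within the remaining budget and in particular never lands on $\phi(t)$ at step $L$. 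Combining the no-escape lemma with the determinism of the good transitions and the uniqueness of the symbolic code on encoding-only trajectories yields a bijection between valid length-$L$ SPIP trajectories from $\phi(s)$ to $\phi(t)$ and $s$-$t$ paths of $G$; since $\phi$, the matrices, the translation vectors and $G'$ are all computable in time polynomial in $|G|$, this establishes that SPIP-Count is \#P-hard. I expect the main obstacle to be exactly this ambiguity-taming step: choosing the spacing of $\phi$ large enough that across \emph{all} edges, \emph{all} non-tail vertices and \emph{all} noise realizations no accidental lattice-point collision can occur, and verifying that the contraction factor $\tfrac12$ together with the $(\tfrac12,\tfrac12)$ rounding offset never produces a coincidental hit on an existing encoding from a garbage point.
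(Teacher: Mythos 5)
Your reduction is the same one the paper uses --- embed the vertices of a \#PATH COUNT instance into $\mathbb{Z}^2$ via an injection $\phi$ and realize each edge as a contractive affine map carrying $\phi(u)$ to $\phi(v)$ --- but you supply three ingredients the paper's proof only asserts or omits. First, the layered expansion $G'$ with ``stay'' edges at $t$ reconciles the variable length of $s$--$t$ paths with SPIP's fixed trajectory length $n$; the paper never addresses this. Second, your explicit choice $b_e=\phi(v)-\tfrac12\phi(u)+(\tfrac12,\tfrac12)$ with $\epsilon<\tfrac12$ makes every intended transition land on $\phi(v)$ for \emph{all} admissible noise, so the reduction is genuinely parsimonious. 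This is not a cosmetic improvement: the paper's Step~4 claims each graph path yields a trajectory ``unique up to bounded noise,'' while the rest of the paper insists that noise and rounding give a branching factor $k\ge 4$ per step --- a tension that would destroy the exact count correspondence claimed in Step~5. Your construction resolves it. Third, and most importantly, you identify that the real content is surjectivity in the other direction --- that the engineered instance admits \emph{no} valid trajectories other than the encoded paths --- which the paper simply does not argue. One caution on that last point: your ``no-escape lemma'' as headlined (``every reachable point is some $\phi(w)$'') is not literally what you need, since applying $T_e$ to a point other than $\phi(\operatorname{tail}(e))$ produces a reachable non-encoding point by design; the statement to prove is that no trajectory \emph{terminating} at $\phi(t)$ ever leaves the encoding set, equivalently that the garbage orbit can never re-enter $\{\phi(w)\}$. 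Your inductive step as described controls images of encoding points under wrong maps, but a complete proof must also control images of arbitrary non-encoding points accumulated over several steps (contraction shrinks distances, which works \emph{against} you here), and this is exactly the spacing-of-$\phi$ argument you flag as the main obstacle. That step is plausible with exponentially separated identifiers but is not yet proven; it is, however, a gap in the paper's own proof as well, so your proposal is best read as a corrected and essentially completable version of the published argument rather than a divergence from it.
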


\begin{proof}
Let $G = (V, E)$ be an instance of \#PATH COUNT. Our goal is to construct an SPIP instance in which the number of valid symbolic trajectories from $x_0$ to $x_n$ corresponds exactly to the number of $s \rightarrow t$ paths in $G$.

\textbf{Step 1: State Encoding.} Assign each vertex $v_i \in V$ a unique symbolic state $x_i \in \mathbb{Z}^2$ via a bijection $\phi: V \rightarrow \mathbb{Z}^2$.

\textbf{Step 2: Edge Encoding.} For each edge $(u,v) \in E$, define a contractive affine map $T_{(u,v)}$ such that:
\[
T_{(u,v)}(x_u) = \lfloor A x_u + b_{(u,v)} + \delta_{(u,v)} \rfloor \approx x_v
\]
Where:
\begin{itemize}
    \item $A$ is a fixed contractive matrix (e.g., $0.5I$),
    \item $b_{(u,v)}$ is chosen so that the unperturbed map lands near $x_v$,
    \item $\delta_{(u,v)}$ is bounded noise (e.g., in $[-1,1]^2$) allowing rounding to $x_v$.
\end{itemize}

\textbf{Step 3: Trajectory Construction.} A valid symbolic trajectory in SPIP corresponds to a valid path in $G$. Each sequence of transformations that maps $x_0 = \phi(s)$ to $x_n = \phi(t)$ represents a symbolic realization of a path from $s$ to $t$.

\textbf{Step 4: One-to-One Mapping.} For each path $\pi = (s \rightarrow v_1 \rightarrow \dots \rightarrow t)$ in $G$, there exists a unique (up to bounded noise) symbolic trajectory $\mathcal{P}_\pi$ in SPIP that applies the corresponding affine maps.

\textbf{Step 5: Reduction.} Since each valid path in $G$ maps to a distinct trajectory in SPIP, counting SPIP trajectories from $x_0$ to $x_n$ is equivalent to counting $s \rightarrow t$ paths in $G$. The reduction can be computed in polynomial time.

Hence, SPIP-Count is \#P-hard. \renewcommand{\qedsymbol}{}
\end{proof}

\subsection{Growth of the Solution Space}

These complexity results imply that any polynomial-time inversion algorithm for SPIP would imply breakthroughs in symbolic reachability and path counting, both long-standing intractable problems.

In practice, the solution space of SPIP exhibits a rapid combinatorial explosion. For example, assume that for each transformation $T_i$, the bounded noise $\delta_i$ can cause $k$ distinct rounded outcomes. Then, the number of possible symbolic trajectories of length $n$ is at least $k^n$. For modest values of $k$ (e.g., $k = 4$) and $n = 128$, the number of possible paths exceeds $2^{256}$, already infeasible to search or store.
This exponential trajectory growth, combined with symbolic overlap and non-injectivity, reinforces our position that SPIP constitutes a plausible and practically hard computational assumption suitable for post-quantum cryptography.

To support the claim of exponential trajectory growth, we now provide a lower bound analysis based on per-step ambiguity induced by bounded noise and rounding.

At each step $i$, the applied transformation is:
\[
x_{i+1} = \lfloor A^{(\sigma_i)} x_i + b^{(\sigma_i)} + \delta_i \rfloor
\]
Due to bounded noise $\delta_i \in [-\epsilon, \epsilon]^2$ and floor rounding, the perturbed input region maps to an uncertain discrete neighborhood in $\mathbb{Z}^2$. Let $k$ denote a conservative lower bound on the number of possible integer lattice outputs (distinct $x_{i+1}$ values) that can arise from a single $x_i$ under a fixed $T^{(j)}$ and varying $\delta_i$:
\[
k := \min_{x \in \mathbb{Z}^2, j \in \Sigma} \left|\left\{ \lfloor A^{(j)} x + b^{(j)} + \delta \rfloor \,\middle|\, \delta \in [-\epsilon, \epsilon]^2 \right\} \right|
\]

That is, $k$ quantifies the symbolic branching factor per step due to noise-induced discretization. Under this model, the number of distinct symbolic paths of length $n$ is lower bounded by:
\[
|\mathcal{W}_n| \geq k^n
\]
This bound holds even when the transformation sequence $\sigma = (\sigma_0, ..., \sigma_{n-1})$ is fixed. When $\sigma$ itself is variable and sampled from a symbolic alphabet $\Sigma$ of size $m$, the combined path space expands to:
\[
|\mathcal{W}_n| \geq (k \cdot m)^n
\]

Thus, symbolic path space exhibits exponential growth in $n$, stemming from both symbolic control (\textit{transformation selection}) and chaotic ambiguity (\textit{noise-induced branching}). This formally substantiates the hardness premise of SPIP, as no efficient algorithm can enumerate or invert exponentially many ambiguous paths.

\subsection{Hardness Assumption} \label{sec:spip-assumption}
\vspace{0.8em}

Building on the symbolic model and complexity analysis introduced in previous sections, we now formally state our main cryptographic assumption. This assumption posits that inverting symbolic chaotic trajectories, defined by a sequence of affine transformations with bounded noise, is computationally infeasible for any probabilistic polynomial, time adversary.

\begin{assumption}[Symbolic Path Inversion Hardness (SPIP-H)]
Let $\mathcal{T} = \{T_i\}_{i=0}^{n-1}$ be a sequence of contractive affine transformations on $\mathbb{Z}^2$ with bounded perturbation $\delta_i \in [-\epsilon, \epsilon]^2$, and let $x_0, x_n \in \mathbb{Z}^2$ be the start and end points of a symbolic trajectory $\mathcal{P} = \{x_0, x_1, \dots, x_n\}$, defined by:
\[
x_{i+1} = T_i(x_i) = \lfloor A_i x_i + b_i + \delta_i \rfloor
\quad \text{for } i = 0, 1, \dots, n-1
\]
We assume that no probabilistic polynomial-time (PPT) adversary $\mathcal{A}$, given $(x_0, x_n, \mathcal{T}, n)$, can recover any valid intermediate trajectory $\mathcal{P}'$ such that:
\[
\mathcal{P}' = \{x_0, x_1', ..., x_n\} \in \mathcal{W}_n(\mathcal{T})
\]
with non-negligible probability in $n$, where $\mathcal{W}_n(\mathcal{T})$ denotes the set of all symbolic paths under $\mathcal{T}$.
\end{assumption}

\subsection{Interpretation and Security Implications}

The SPIP-H assumption captures the intuition that symbolic chaotic paths, due to their exponential branching, rounding-induced non-injectivity, and noise perturbations, are inherently one-way in nature. This places SPIP-H in the family of hardness assumptions related to inverting structured combinatorial systems, albeit in a non-algebraic, structure-free setting.

Unlike traditional hardness assumptions based on algebraic structures (e.g., discrete logs, lattices, multivariate polynomials), SPIP-H does not rely on any homomorphic property or group-theoretic symmetry. This makes it naturally resistant to classes of quantum algorithms that exploit algebraic regularity, such as Shor’s and certain Grover-optimized search techniques.

SPIP-H serves as a plausible foundation for building one-way functions, pseudorandom generators, key derivation mechanisms, or even public-key schemes that are entirely structure-free. In later sections and follow-up work, we outline how additional layers, such as hash-based masking, can amplify symbolic unpredictability and embed SPIP into real cryptographic constructions.

\subsection{Quantum Security Considerations} \label{sec:quantum}
\vspace{0.8em}
A fundamental motivation behind the SPIP is to construct a post-quantum hardness assumption that resists quantum adversaries not by exploiting algebraic structure, but by avoiding it altogether. In the following paragraphs, we critically assess the security of SPIP against quantum algorithms, with a particular focus on Grover-style unstructured search and hybrid quantum-classical strategies.

Unlike factoring, discrete logarithms, and lattice-based schemes which reside in algebraic domains, SPIP does not embed its structure into groups, rings, or vector spaces. It operates purely over the symbolic composition of contractive affine maps with bounded noise and integer rounding. As such, there is no periodicity, hidden subgroup, or commutative algebra to exploit, rendering Shor’s algorithm and its variants inapplicable. This removes the most significant known exponential quantum threat.

Grover's algorithm can quadratically accelerate brute-force search in unstructured spaces \cite{grover1996fast}. In the context of SPIP, a quantum adversary could, in theory, evaluate candidate symbolic paths and test whether a given trajectory yields the observed endpoint $x_n$.

However, several structural barriers limit Grover’s applicability to SPIP:

\begin{itemize}
    \item \textbf{Exponential Search Space:} The total number of symbolic trajectories of length $n$ is $\Omega((m \cdot k)^n)$, where $m$ is the number of transformations and $k$ the minimum per-step symbolic branching due to noise. Thus, even with Grover's quadratic speedup, the time complexity is $\Omega(\sqrt{(m \cdot k)^n})$, which remains exponential in $n$.

    \item \textbf{Non-injective Evaluation:} SPIP's evaluation oracle is not efficiently reversible. Due to noise-induced overlap, many trajectories map to the same endpoint $x_n$, and verifying a candidate trajectory requires simulating stochastic transformations with bounded perturbation, a process inherently probabilistic and discretized.

    \item \textbf{Oracle Ambiguity:} Grover’s algorithm assumes a Boolean oracle indicating whether a candidate is correct. For SPIP, even this is ambiguous. Given an endpoint $x_n$, there may be hundreds of valid symbolic paths leading to it. Thus, the oracle’s success condition is fuzzy and non-unique, degrading Grover’s effectiveness.

    \item \textbf{Superposition Instability:} SPIP evaluation requires floor functions, discretization, and sampling from continuous noise. These operations are difficult to model coherently in a quantum superposition, and no efficient reversible circuit for SPIP evaluation currently exists. This obstructs the construction of a Grover-compatible quantum oracle.
\end{itemize}

Taken together, these characteristics suggest that SPIP behaves closer to a noisy combinatorial problem than an unstructured search task, complicating quantum evaluation models.

Furhtermore, SPIP contrasts with lattice-based schemes like LWE in two key ways: (1) LWE enjoys worst-case hardness proofs but is still reducible to linear algebraic structure, and (2) LWE security reductions often rely on hybrid arguments sensitive to decryption noise magnitude. SPIP, on the other hand, lacks such algebraic dependence and draws its hardness from symbolic path explosion, bounded noise, and topological ambiguity.

While this lack of structure prevents classic reductions, it also immunizes SPIP against known quantum decompositions. Indeed, as with certain isogeny-based protocols (e.g., CSIDH), SPIP’s complexity stems from the difficulty of path enumeration in highly non-linear, chaotic, and non-commutative systems.

\subsection{Parameter Considerations}

The hardness and cryptographic utility of SPIP are governed by three core tunable parameters: path length $n$, noise bound $\epsilon$, and the size of the symbolic transformation set $|\Sigma| = m$. Each plays a distinct and interdependent role in shaping the symbolic trajectory space $\mathcal{W}_n$, affecting both security guarantees and implementation costs.

The most direct contributor to hardness is the trajectory length $n$. Since the symbolic path space grows approximately as $(m \cdot k)^n$, where $k$ is the effective branching factor induced by bounded noise and rounding ambiguity, longer paths yield exponentially greater inversion difficulty. This growth not only increases the computational burden on an adversary but also introduces deeper symbolic collisions and more complex non-injectivity patterns, key ingredients for post-quantum resistance. However, increasing $n$ also introduces cost: forward trajectory computation, storage of intermediate states, and potential communication overhead in protocol contexts.

The noise parameter $\epsilon$ regulates symbolic branching. When $\epsilon$ is too small, symbolic ambiguity is limited, and trajectories become close to deterministic, making them easier to reverse. If $\epsilon$ becomes too large, the system risks degeneracy trajectory outputs saturate or lose resolution, diminishing their utility. There exists a critical range for $\epsilon$, often problem-dependent, where symbolic entropy is maximized without collapsing output variance. Empirically (see Section~\ref{sec:empirical}), this range lies roughly between 0.3 and 0.7, depending on the contraction of the affine maps.

The number of available affine maps, $m$, controls the symbolic alphabet size. A larger $m$ boosts per-step entropy and grows the path space multiplicatively. However, symbolic freedom (entropy per control bit) tends to decay when $m$ exceeds a threshold, due to overlapping outputs and map redundancy (see Figure~\ref{fig:symbolic_freedom}). Thus, while increasing $m$ helps, it must be done with attention to the geometric separation and contractive behavior of the maps.

The joint effect of $n$ and $\epsilon$ on symbolic space growth is visualized in Figure~\ref{fig:3d_entropy_plot}, where the approximate symbolic space size is shown in base-2 logarithmic scale:

\begin{figure}[htbp]
  \centering
  \includegraphics[width=\linewidth]{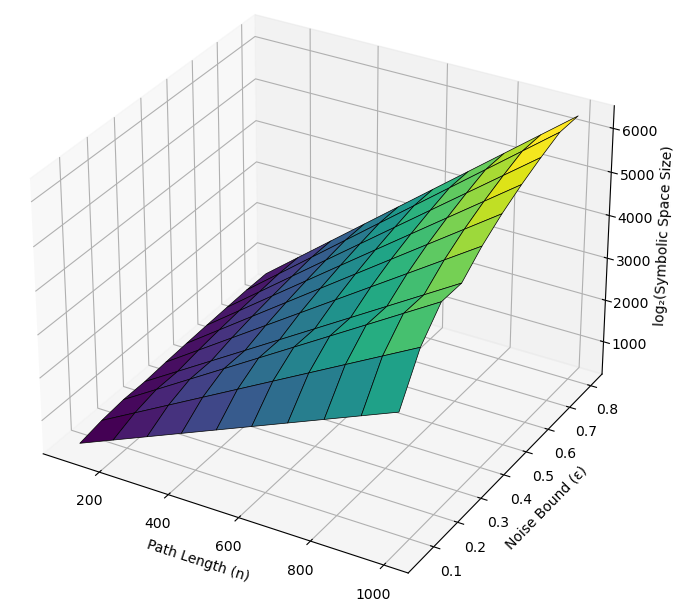}
  \caption{Effect of path length $n$ and noise bound $\epsilon$ on symbolic trajectory space size, assuming fixed $m = 10$. Growth is estimated as $\log_2((m \cdot k)^n)$, with $k \sim \lceil \epsilon \cdot 10 \rceil$.}
  \label{fig:3d_entropy_plot}
\end{figure}

From this surface plot, we observe the following:

\begin{itemize}
    \item Symbolic space size increases exponentially with $n$, as expected.
    \item Higher $\epsilon$ contributes multiplicatively via the branching factor $k$, but its effect saturates near $\epsilon = 0.7$, consistent with empirical entropy observations.
    \item The ridge near the upper-right corner reflects maximal trajectory ambiguity favorable for one-wayness but challenging for protocol determinism or symbolic encoding.
\end{itemize}

In conclusion, these parameters interact non-linearly, and their tuning determines both theoretical security and practical viability. Security grows exponentially in $n$, modulated by $m$ and $\epsilon$, but stability, efficiency, and symbolic resolution degrade if these parameters are unbalanced. Therefore, future cryptographic constructions based on SPIP must select parameter sets via careful profiling and simulation, possibly incorporating adaptive ranges or domain-specific calibration.

\section{Empirical Simulation} \label{sec:empirical}
\vspace{0.8em}

To complement the theoretical complexity analysis of SPIP presented in Sections~\ref{sec:spip} and~\ref{sec:complexity}, we now turn to an empirical investigation of symbolic trajectories under controlled parameter variations. The goal of this simulation study is not to benchmark specific cryptographic algorithms, but to probe the internal structure of the SPIP path space, including entropy growth, endpoint multiplicity, symbolic overlap, and collision patterns, and validate the assumptions underlying its one-wayness.

We simulate SPIP trajectories across eight experimental configurations, systematically increasing the number of transformation steps, the size of the symbolic control set, and the magnitude of the bounded noise. For each configuration, we record symbolic entropy, the number of unique endpoints, the most frequent endpoint count, symbolic freedom, and average spatial dispersion. These observables help quantify the trajectory space's expansion and ambiguity as a function of system parameters.

\begin{table*}[htbp]
\centering
\scriptsize
\caption{Summary of Simulation Results}
\label{tab:spip-results}
\resizebox{\textwidth}{!}{%
\begin{tabular}{llllllllll}
\toprule
\textbf{Experiment} & \textbf{Steps} & \textbf{Transforms} & \textbf{$\epsilon$} & \textbf{Entropy (bits)} & \textbf{Unique Endpoints} & \textbf{Collisions} & \textbf{Most Frequent Count} & \textbf{Avg Distance} & \textbf{Symbolic Freedom} \\
\midrule
Run 1 & 30   & 2   & 0.05 & 2.71 & 7  & 7  & 250 & 1.54 & 2.71 \\
Run 2 & 60   & 4   & 0.10 & 3.23 & 12 & 12 & 254 & 1.90 & 1.62 \\
Run 3 & 120  & 6   & 0.25 & 3.23 & 15 & 14 & 231 & 2.18 & 1.25 \\
Run 4 & 200  & 8   & 0.40 & 3.47 & 23 & 16 & 179 & 2.60 & 1.16 \\
Run 5 & 300  & 12  & 0.50 & 3.57 & 23 & 20 & 186 & 2.60 & 1.00 \\
Run 6 & 500  & 20  & 0.60 & 3.70 & 24 & 21 & 161 & 2.62 & 0.86 \\
Run 7 & 800  & 30  & 0.70 & 3.85 & 28 & 25 & 170 & 2.80 & 0.79 \\
Run 8 & 1200 & 40  & 0.80 & 3.94 & 29 & 26 & 159 & 2.87 & 0.74 \\
\bottomrule
\end{tabular}
}
\end{table*}

The table \ref{tab:spip-results} highlights several empirical phenomena that align closely with the theoretical model of SPIP:

First, we observe that symbolic entropy increases with trajectory depth, confirming the exponential growth of the trajectory space $\mathcal{W}_n$ outlined in Section~\ref{sec:spip}. However, this growth is not unbounded, entropy plateaus around 4 bits, suggesting the onset of symbolic saturation as overlapping trajectories collapse to shared endpoints.

Second, the number of unique endpoints rises with the number of steps, but in a sublinear fashion. This confirms the presence of symbolic collisions and non, injective mappings, which were formally established through complexity, theoretic reductions in Section~\ref{sec:complexity}.

Third, symbolic freedom, defined as normalized entropy per symbolic control choice, declines as the number of transformations increases. This counterintuitive trend underscores a core insight: adding symbolic options does not always yield more effective entropy, especially when contraction and noise induce trajectory convergence. These findings directly reinforce the SPIP-H assumption, which posits that symbolic inversion remains hard despite increasing symbolic flexibility.

To deepen this analysis, we now visualize how specific trajectory properties evolve across the simulated runs. These plots provide concrete, interpretable evidence of symbolic complexity growth and support the core cryptographic intuition behind SPIP.

\begin{figure}[htbp]
    \centering
     \includegraphics[width=\linewidth]{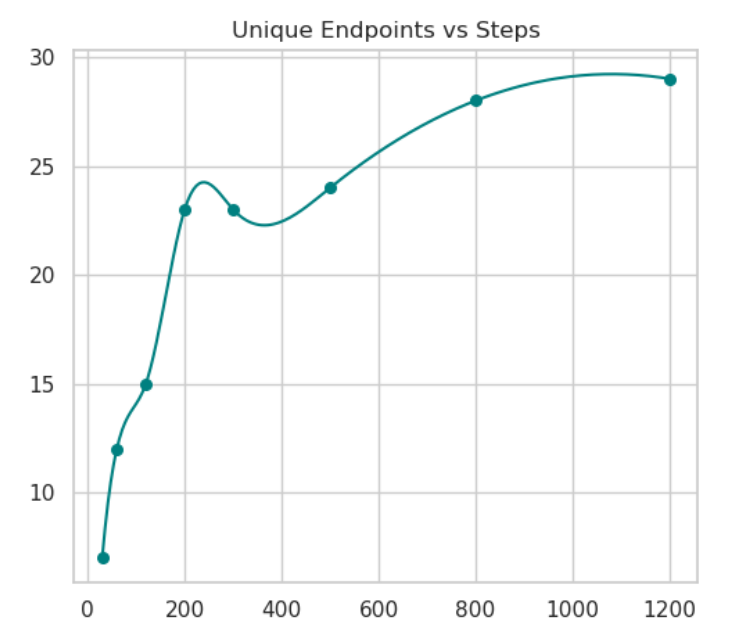}
    \caption{Unique endpoint growth vs. step count.}
    \label{fig:unique_endpoints}
\end{figure}

In Figure~\ref{fig:unique_endpoints}, we see that the number of unique endpoints initially increases rapidly with the number of steps, but begins to plateau after 800--1000 iterations. This saturation behavior, though empirically derived, aligns with our theoretical expectations from bounded rounding and contraction, which inevitably funnel trajectories into a finite symbolic basin.

\begin{figure}[htbp]
    \centering
     \includegraphics[width=\linewidth]{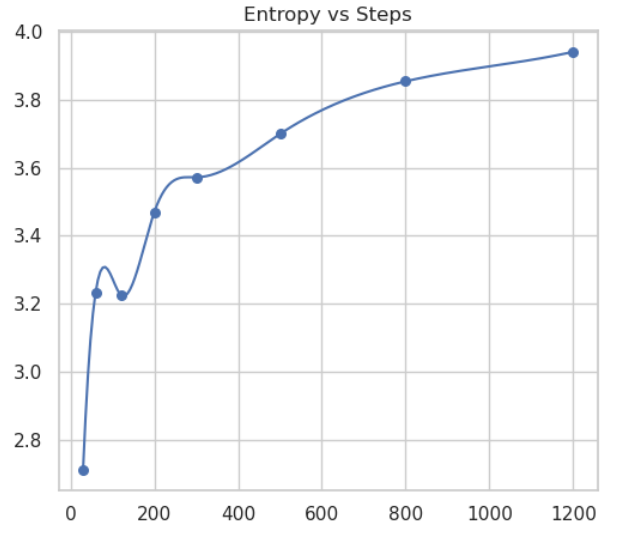}
    \caption{Entropy vs. average spatial dispersion.}
    \label{fig:entropy_vs_distance}
\end{figure}

Figure~\ref{fig:entropy_vs_distance} captures a key post-quantum consideration: as symbolic paths disperse more widely, the entropy of their endpoint distribution increases. The positive trend observed here suggests that symbolic chaos injects real cryptographic unpredictability, especially for long paths.

\begin{figure}[htbp]
    \centering
     \includegraphics[width=\linewidth]{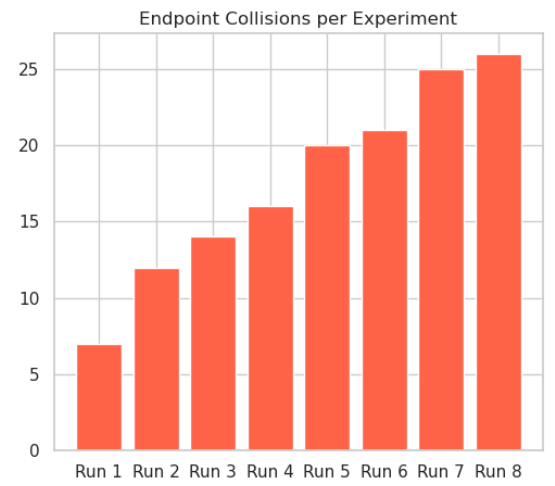}
    \caption{Collision count vs. experiment.}
    \label{fig:collisions}
\end{figure}

Figure~\ref{fig:collisions} shows a monotonic increase in endpoint collisions, affirming the many-to-one nature of SPIP mappings. This symbolic overlap is critical: it introduces ambiguity that prevents efficient inversion, not because of cryptographic hiding, but due to structural chaos.

\begin{figure}[htbp]
    \centering
     \includegraphics[width=\linewidth]{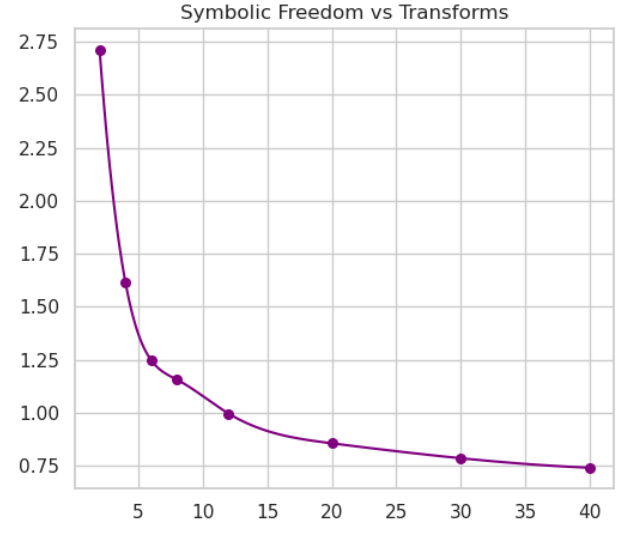}
    \caption{Symbolic freedom declines with transformation count.}
    \label{fig:symbolic_freedom}
\end{figure}

In Figure~\ref{fig:symbolic_freedom}, the decline in symbolic freedom reflects redundancy in symbolic control. As more transformations are introduced, the marginal entropy per transformation shrinks, a phenomenon that further deepens the inversion difficulty by embedding unpredictability without pattern regularity.

\begin{figure}[htbp]
    \centering
     \includegraphics[width=\linewidth]{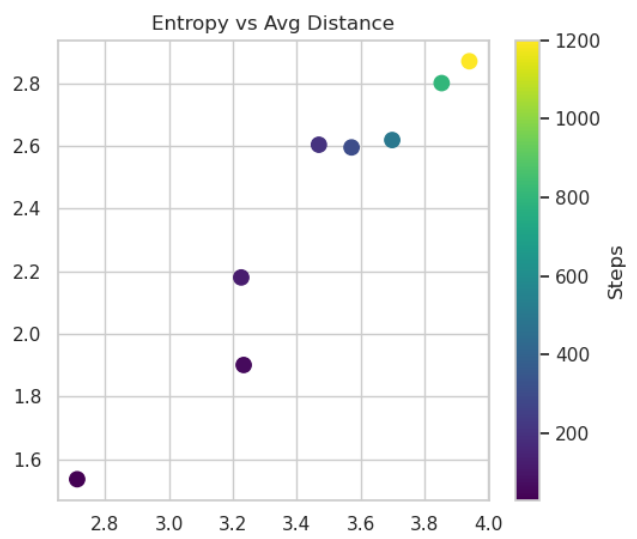}
    \caption{Entropy increases with symbolic depth.}
    \label{fig:entropy_vs_steps}
\end{figure}

Entropy growth is further confirmed in Figure~\ref{fig:entropy_vs_steps}, where symbolic complexity scales with path length, reinforcing SPIP’s alignment with the one-wayness intuition: the forward process expands unpredictably, while the backward recovery becomes infeasible due to entropy dispersion.

\begin{figure}[htbp]
    \centering
     \includegraphics[width=\linewidth]{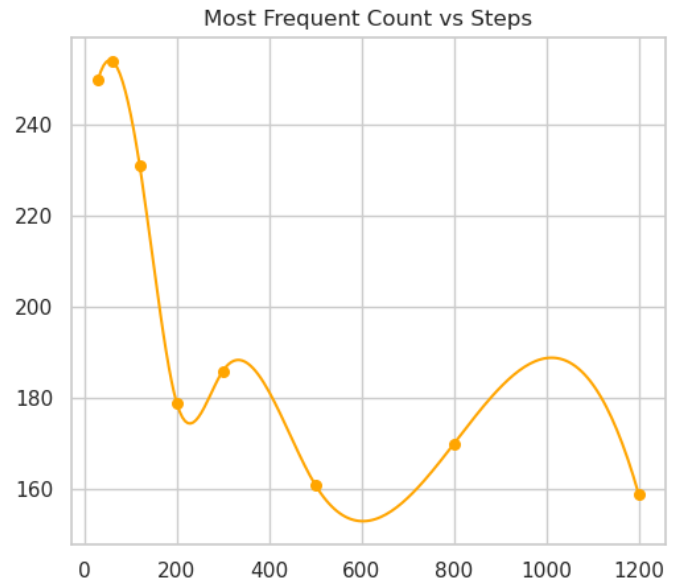}
    \caption{Dominance of most frequent endpoint vs. steps.}
    \label{fig:count_vs_steps}
\end{figure}

Finally, Figure~\ref{fig:count_vs_steps} illustrates the dominance of certain symbolic endpoints. Despite high entropy, a few attractor endpoints persistently capture many trajectories. This imbalance highlights the inherent symbolic bias in the system a further challenge to inversion, as it blurs distinguishability between high-probability and low-probability symbolic codes.

\vspace{0.5em}

Taken together, these results provide strong empirical support for the SPIP-H assumption. They validate the presence of symbolic explosion, nonlinear entropy growth, and endpoint ambiguity under simple, structure-free dynamics hallmarks of a one-way system in the information theoretic sense.

\section{Discussion and Implications} \label{sec:discussion}
\vspace{0.8em}

Having established the SPIP-H assumption and its complexity-theoretic grounding, we now examine its broader implications for cryptographic theory. We compare SPIP to existing hardness assumptions, explore its potential as a building block for cryptographic primitives, and outline possible limitations and open questions.

\subsection{Comparison with Classical Approach}

SPIP represents a conceptual departure from traditional hardness assumptions, which typically rely on algebraic structures such as finite fields or vector spaces as found in RSA, ECC, LWE, and multivariate cryptography \cite{shor1994algorithms, regev2009lattices, patarin1996hidden}. These structures, while enabling efficient constructions and formal reductions, also introduce symmetries that quantum algorithms like Shor’s can exploit.

In contrast, SPIP derives its hardness from symbolic chaos over $\mathbb{Z}^2$ via contractive affine maps with bounded noise. It lacks algebraic symmetry entirely there are no groups, rings, or fields to invert. The complexity stems from exponential path growth, non-injective mappings, and symbolic ambiguity, not from solving structured mathematical problems.

This absence of structure makes SPIP naturally resistant to known quantum attacks, including Shor’s and more subtly Grover’s, which faces verification ambiguity and oracle instability in SPIP’s combinatorial landscape. As such, SPIP opens the door to a distinct design space in post-quantum cryptography: one grounded in symbolic complexity rather than algebraic hardness.

\subsection{SPIP as Cryptographic Primitives?}

The SPIP assumption naturally lends itself to the construction of OWF. Given an initial point $x_0$, a sequence of contractive transformations $\mathcal{T}$, and path length $n$, the final point $x_n$ is computed via forward symbolic evolution:
\[
f(x_0) = x_n = T_{n-1} \circ \cdots \circ T_0 (x_0)
\]
Due to the symbolic ambiguity and exponential trajectory space, inverting this process i.e., recovering a valid preimage path from $x_n$, is assumed to be infeasible under the SPIP-H assumption.

Extending SPIP to PRF may require additional assumptions or hybrid constructions. One potential direction is to introduce secret keys through the initial seed $x_0$ or through a keyed transformation sequence $\mathcal{T}_k$. Such designs could define keyed mappings with high entropy, but without formal proofs of unpredictability or uniform output distribution, these constructions remain speculative.

\subsection{Limitations and Open Questions}

Despite its promise, SPIP raises several important challenges and open questions:

\begin{itemize}
    \item \textbf{Side-channel resilience:} If an adversary gains partial knowledge of internal states or noise vectors (e.g., via side-channel leakage), the inversion hardness could degrade significantly.
    
    \item \textbf{Entropy quantification:} A deeper statistical understanding of symbolic entropy, path overlap, and endpoint distributions is needed to support claims of randomness and unpredictability.
    
    \item \textbf{Symbol-to-bit encoding:} Translating symbolic trajectories into binary keys or ciphertexts requires encoding schemes that preserve one-wayness and avoid leakage or compression artifacts.
    
    \item \textbf{Quantum adversaries:} While SPIP lacks algebraic structure exploitable by current quantum algorithms, a formal quantum security model (e.g., under QROM) is still needed to establish robustness.
\end{itemize}

Nevertheless, SPIP introduces a new cryptographic design paradigm based on symbolic dynamics and chaotic evolution. Its structure-free nature offers inherent post-quantum resistance and invites exploration of entirely new primitives grounded in combinatorial complexity rather than algebra.

\section*{This is Not a Protocol Design Paper}
\vspace{0.8em}
To avoid misinterpretation, we emphasize that this paper does not present a cryptographic protocol, scheme, or practical instantiation. Instead, it introduces and analyzes a new computational hardness assumption the SPIP based on the combinatorial and dynamical complexity of symbolic chaotic walks over $\mathbb{Z}^2$.

we rigorously define SPIP, analyze its complexity-theoretic properties, and argue for its plausibility as a basis for structure-free, post-quantum cryptography. The paper does not include algorithmic instantiations, encoding mechanisms, implementation benchmarks, or security proofs for cryptographic constructions. These are left as future work.

We make this distinction explicit to ensure that the theoretical scope of this work is not confused with applied cryptographic engineering.

\section{Conclusion and Future Work} \label{sec:conclusion}
\vspace{0.8em}

This paper proposes a fundamentally new perspective for post-quantum cryptography, hardness grounded not in algebraic structures, but in the combinatorial and dynamical complexity of chaotic symbolic systems. Most existing hardness assumptions in cryptography, such as those based on integer factorization, discrete logarithms, and lattices rely on structured algebraic frameworks that are vulnerable to quantum algorithms exploiting their inherent symmetries. In contrast, our approach seeks to construct cryptographic hardness from the unpredictable evolution of symbolic trajectories generated by contractive affine maps with bounded noise and rounding, a system inherently resistant to algebraic reductions and quantum exploitation. \\\\ We introduced the SPIP as a concrete formulation of this idea. SPIP formalizes the challenge of recovering symbolic trajectories over $\mathbb{Z}^2$ from endpoints, and we proved that it is both PSPACE-hard and \#P-hard. Empirical simulations further support these theoretical results, demonstrating exponential growth in trajectory space and significant endpoint collisions, even for moderate parameters. These findings suggest that SPIP provides a plausible, structure-free basis for post-quantum cryptographic constructions.\\\\ This work builds directly on our prior study of RNIFS \cite{bouke2025rnifs}, which established the mathematical properties of symbolic chaos, such as entropy growth, fractal attractors, and symbolic ambiguity, in discrete dynamical systems. While the RNIFS study focused on the mathematical behavior of symbolic trajectories, the present work extends these insights into the cryptographic domain by defining SPIP as a hardness assumption. Together, these works establish a coherent research trajectory, from foundational mathematical analysis to the development of new cryptographic paradigms grounded in symbolic unpredictability.  \\\\Looking forward, our research agenda progresses in stages. The immediate priority is to instantiate SPIP into practical cryptographic primitives, including one-way functions, pseudorandom generators, and key derivation schemes. Building on this, a critical challenge is designing robust symbolic-to-binary encoding methods to translate chaotic paths into usable cryptographic keys. Formalizing the security of SPIP-based primitives under the Quantum Random Oracle Model (QROM) will be essential for establishing quantum resistance. Finally, investigating side-channel resilience, such as the impact of partial leakage or approximation errors, will be crucial for real-world deployment.

\vspace{0.8em}

\section*{Declarations}
\vspace{0.8em}
\begin{itemize}
  \item \textbf{Funding:} Not applicable.
  
  \item \textbf{Conflict of Interest:} The authors declare that there is no conflict of interest.

\item \textbf{Availability of Data and Materials: } 
This study did not use any external datasets. All experimental results were generated through custom simulations as described in the manuscript. The full simulation code, along with configuration files and analysis scripts, is publicly available at:  
\href{https://github.com/drbouke/SPIP}{\texttt{https://github.com/drbouke/SPIP}}.

  \item \textbf{Ethics Approval:} Not applicable.
\end{itemize}

\vspace{10em}

\section*{Author Biographies}
\begin{tcolorbox}[biobox]
  \begin{wrapfigure}{l}{0.35\linewidth}
    \vspace{-0.5em}
    \includegraphics[width=\linewidth]{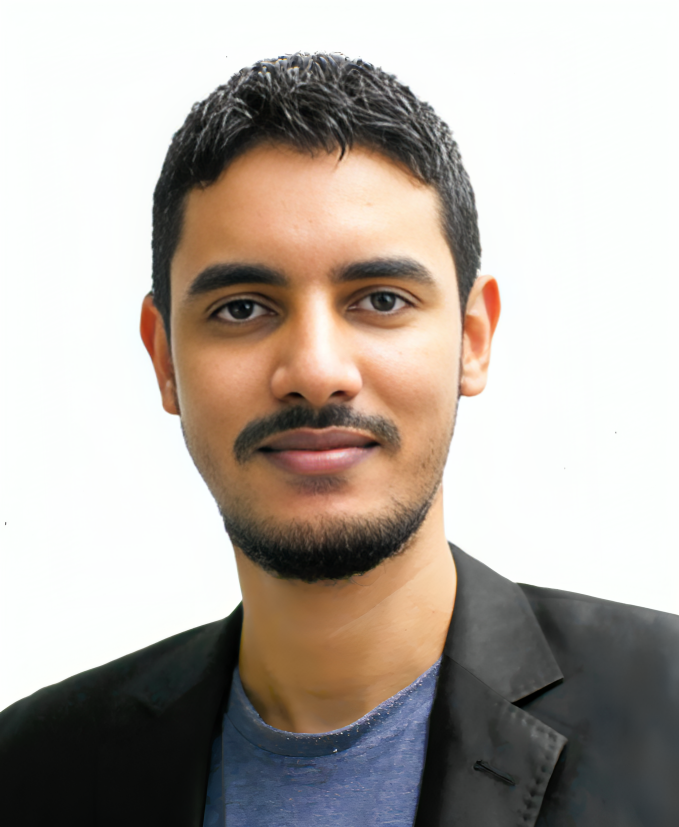}
  \end{wrapfigure}
  \textbf{Mohamed Aly Bouke}\,\href{https://orcid.org/0000-0003-3264-601X}{\includegraphics[height=1.8ex]{figs/orcid.png}} is a researcher with interdisciplinary expertise across theoretical mathematics, computer science, artificial intelligence, and cryptography. He holds a Master’s and a Ph.D. in Information Security from Universiti Putra Malaysia and has a background in mathematics education. His academic work spans topics such as mathematical modeling, epistemic systems, AI architectures, and secure computation. Dr. Bouke is an active member of the \textit{Institute of Electrical and Electronics Engineers (IEEE)}, the \textit{International Information System Security Certification Consortium (ISC2)}, and the \textit{Institute for Systems and Technologies of Information, Control and Communication (INSTICC)}. His contributions include peer-reviewed publications, invited talks, and academic training in both technical and theoretical domains.

  \vspace{0.8em}
  \textit{Email: bouke@ieee.org}
\end{tcolorbox}


\printbibliography

\end{document}